\newcommand{\salg}[1]{\mathfrak {#1}}
\newcommand{\Ob}{\operatorname{Ob}}
\newcommand{\cat}[1]{\mathbf{#1}}
\newcommand{\MeasSets}{\cat{Meas}}
\newcommand{\diff}[2]{\frac{\mathrm{d}  #1}{\mathrm{d}  #2}}
\def \d{\mbox{\(\,\mathrm{d}\)}}
\renewcommand{\epsilon}{\varepsilon}
\newcommand{\norm}[1]{\left\|#1\right\|}
\newcommand{\sm}{\setminus}
\newcommand{\Rr}{\mathbb{R}}
\newcommand{\Nn}{\mathbb{N}}
\newcommand{\set}[2]{\{\,#1 \, \vert \, #2\,\} }
\newcommand{\Sinf}{\cat{S}}
\newcommand{\Smon}{\mathcal S}
\newcommand{\Us}{\mathbf{1}}
\newcommand{\sheaf}[1]{\mathcal{#1}}
\begin{document}
\title{Information cohomology of classical vector-valued observables}
%
%
\author{Juan Pablo Vigneaux
\inst{1,2}
}
\authorrunning{J.~P. Vigneaux}
%
\institute{Institut de Math\'ematiques de Jussieu--Paris Rive Gauche (IMJ-PRG), Universit\'e de Paris, 8 place Aur\'elie N\'emours, 75013 Paris, France. \and 
Max Planck Institute for Mathematics in the Sciences, Inselstra{\ss}e 22, 04103 Leipzig, Germany. \\
\href{https://orcid.org/0000-0003-4696-4537}{orcid.org/0000-0003-4696-4537}
}
\maketitle              
\begin{abstract}
We provide here a novel algebraic characterization of two \emph{information measures} associated with  a vector-valued random variable, its differential entropy and the dimension of the underlying space, purely based on their recursive properties (the chain rule and the nullity-rank theorem, respectively). More precisely, we compute the information cohomology of Baudot and Bennequin with coefficients in a module of continuous probabilistic functionals over a category that mixes discrete observables and continuous vector-valued observables, characterizing completely the $1$-cocycles; evaluated on continuous laws, these cocycles are linear combinations of the differential entropy and the dimension. 

\keywords{Information cohomology \and Entropy \and Dimension \and Information measures \and Topos theory }
\end{abstract}

\section{Introduction}


 Baudot and Bennequin \cite{Baudot2015} introduced \emph{information cohomology}, and identified Shannon entropy as a nontrivial cohomology  class in degree 1. This cohomology has an explicit description in terms of \emph{cocycles} and \emph{coboundaries}; the \emph{cocycle equations} are a rule that relate different values of the cocycle. When its coefficients are a module of measurable \emph{probabilistic functionals} on a category of \emph{discrete} observables, Shannon's entropy defines a $1$-cocycle and the aforementioned rule is simply the chain rule; moreover, the cocycle equations in every degree are systems of functional equations and one can use the techniques developed by Tverberg, Lee, Knappan, Acz\'el, Dar\'oczy, etc. \cite{Aczel1975} to show that, in degree one, the entropy is the \emph{unique} measurable, nontrivial solution. The theory thus gave a new algebraic characterization of this information measure based on topos theory \emph{\`a la} Grothendieck, and showed that 
\emph{the chain rule is its defining algebraic property.}

It is natural to wonder if a similar result holds for the \emph{differential entropy}. 
We consider here information cohomology with coefficients in a presheaf of continuous probabilistic functionals on a category that mixes discrete and continuous (vector-valued) observables, and establish that every $1$-cocycle, when evaluated on probability measures absolutely continuous with respect to the Lebesgue measure, is a linear combination of the differential entropy and the dimension of the underlying space (the term continous has in this sentence three different meanings).  We already showed that this was true for gaussian measures \cite{Vigneaux2019-thesis}; in that case, there is a finite dimensional parametrization of the laws, and  we were able to use Fourier analysis to solve the $1$-cocycle equations. Here we exploit that result, expressing any density as a limit of gaussian mixtures (i.e. convex combinations of gaussian densities), and then using the $1$-cocycle condition to compute the value of a $1$-cocycle on gaussian mixtures in terms of its value on discrete laws and gaussian laws. The result depends on the conjectural existence of a ``well-behaved'' class of probabilities and probabilistic functionals, see Section \ref{sec:probas}.

The dimension appears here as an information quantity in its own right: its ``chain rule'' is the nullity-rank theorem. In retrospective, its role as an information measure is already suggested by  old results in information theory. For instance, the expansion of Kolmogorov's $\epsilon$-entropy  $H_\epsilon(\xi)$ of a continuous, $\Rr^n$-valued random variable $\xi$ ``is determined first of all by the dimension of the space, and the differential entropy $h(\xi)$ appears only in the form of the second term of the expression for $H_\epsilon(\xi)$.'' \cite[Paper 3, p. 22]{Kolmogorov1993}

\section{Some known results about information cohomology}\label{sec:known-results}

Given the severe length constraints, it is impossible to report here the motivations behind information cohomology, its relationship with traditional algebraic characterizations of entropy, and its topos-theoretic foundations. For that,  the reader is referred to the introductions of  \cite{Vigneaux2019-thesis} and \cite{Vigneaux2020information}. We simply remind here a minimum of definitions in order to make sense of the characterization of $1$-cocycles that is used later in the article. 

Let $\cat S$ be a partially ordered set (poset); we see it as a category, denoting the order relation by an arrow. It is supposed to have a terminal object $\top$ and to satisfy the following property: whenever $X,Y,Z\in \Ob \cat S$ are such that $X\to Y$ and $X\to Z$, the categorical product $Y\wedge Z$ exists in $\cat S$. An object of  $X$ of $\cat S$ (i.e. $X\in \Ob \cat S$) is interpreted as an \emph{observable}, an arrow $X\to Y$ as $Y$ being coarser than $X$, and $Y\wedge Z$  as the joint measurement of $Y$ and $Z$. 

The category $\cat S$ is just an algebraic way of encoding the relationships between observables. The measure-theoretic ``implementation'' of them comes in the form of a functor $\sheaf E:\cat S\to \cat{Meas}$ that  associates to each $X \in \Ob \cat S$ a measurable set $\sheaf E(X)=(E_X,\salg B_X)$, and to each arrow $\pi:X\to Y$ in $\cat S$ a measurable \emph{surjection} $\sheaf E(\pi):\sheaf E(X)\to \sheaf E(Y)$. To be consistent with the interpretations given above, one must suppose that   $E_\top \cong \{\ast\}$ and that $\sheaf E(Y\wedge Z)$ is mapped \emph{injectively} into $\sheaf E(Y)\times \sheaf E(Z)$ by  $\sheaf E(Y\wedge Z \to Y)\times\sheaf E(Y\wedge Z \to Z)$. 
We consider mainly two examples: the discrete case, in which $E_X$ finite and $\salg B_X$ the collection of its subsets, and the Euclidean case, in which $E_X$ is a Euclidean space and $\salg B_X$ is its Borel $\sigma$-algebra. The pair $(\cat S, \sheaf E)$ is an \emph{information structure}.

Throughout this article, conditional probabilities are understood as disintegrations. Let $\nu$ a $\sigma$-finite measure on a  measurable space $(E,\salg B)$, and $\xi$ a $\sigma$-finite measure on $(E_T,\salg B_T)$. The measure $\nu$ has a \emph{disintegration} $\{\nu_t\}_{t\in E_T}$ with respect to a measurable map $T:E\to E_T$ and $\xi$, or a $(T,\xi)$-disintegration, if each $\nu_t$ is a $\sigma$-finite measure on $\salg{B}$ concentrated on $\{T=t\}$---i.e. $\nu_t(T\neq t)=0$ for $\xi$-almost every $t$---and for each measurable nonnegative function $f:E\to \Rr$, the mapping $t\mapsto \int_E f \d \nu_t$ is measurable and $\int_E f\d \nu= \int_{E_T} \left(\int_E  f(x) \d\nu_t(x) \right)\d \xi(t)$  \cite{Chang1997}. 

We associate to each $X\in \Ob \cat S$ the set $\Pi(X)$ of probability measures on $\sheaf E(X)$ i.e. of possible \emph{laws} of $X$, and to each arrow $\pi:X\to Y$ the \emph{marginalization map} $\pi_* :=  \Pi(\pi): \Pi(X)\to  \Pi(Y)$ that maps $\rho$ to the image measure $\sheaf E(\pi)_*(\rho)$. More generally, we consider any subfunctor $\sheaf Q$ of $\Pi$ that is stable under conditioning: for all $X\in \Ob \Sinf$, $\rho\in \sheaf Q(X)$, and $\pi:X\to Y$, $\rho|_{Y=y}$ belongs to $\sheaf Q(X)$ for $\pi_*\rho$-almost every  $y\in E_Y$, where $\{\rho|_{Y=y}\}_{y\in E_Y}$ is the $(\sheaf E\pi, \pi_*\rho)$-disintegration of $\rho$.

We associate to each $X\in \Ob \cat S$ the set $\sheaf S_X=\set{Y}{X\to Y}$, which is a monoid under the product $\wedge$ introduced above. The assignment $X\mapsto \sheaf S_X$ defines a contravariant functor (presheaf). The induced algebras $\mathcal A_X= \Rr[\sheaf S_X]$ give a presheaf $\sheaf A$. An $\sheaf A$-module is a collection of modules $\sheaf M_X$ over $\sheaf A_X$, for each $X\in \Ob \cat S$, with an action that is ``natural'' in $X$.  The main example is the following: for any adapted probability functor $\sheaf Q:\cat S \to \cat{Meas}$, one introduces a \emph{contravariant} functor $\sheaf F = \sheaf F(\sheaf Q)$ declaring that $\sheaf F(X)$ are the measurable functions on $\sheaf Q(X)$, and $\sheaf F(\pi)$ is precomposition with $\sheaf Q(\pi)$ for each morphism $\pi$ in $\cat S$. The monoid $\sheaf S_X$ acts on $ \sheaf F(X)$ by the rule:
\begin{equation}\label{eq:conditional_action}
\forall Y\in \Smon_X, \forall \rho\in \sheaf Q(X), \quad Y.\phi(\rho) = \int_{E_Y} \phi(\rho|_{Y=y})\d \pi^{YX}_*\rho(y)
\end{equation}
where $\pi^{YX}_*$ stands for the marginalization $\sheaf Q(\pi^{YX})$ induced by $\pi^{YX}:X\to Y$ in $\cat S$. This action can be extended by linearity to $\sheaf A_X$ and is natural in $X$. 

In \cite{Vigneaux2020information}, the information cohomology $H^\bullet(\cat S, \sheaf F)$ is defined using \emph{derived functors} in the category of $\mathcal A$-modules, and then described explicitly, for each degree $n\geq 0$, as a quotient of $n$-cocycles by $n$-coboundaries. For $n=1$, the coboundaries vanish, so we simply have to describe the cocycles. Let $\sheaf B_1(X)$ be the $\sheaf A_X$-module freely generated by a collection of bracketed symbols $\{[Y]\}_{Y\in \sheaf S_X}$; an arrow $\pi:X\to Y$ induces an inclusion $\sheaf B_1(Y)\hookrightarrow \sheaf B_1(X)$, so $\sheaf B_1$ is a presheaf. A \emph{$1$-cochain} is a natural transformations $\varphi: \sheaf B_1 \Rightarrow \sheaf F$, with components $\varphi_X:\sheaf B_1(X) \to \sheaf F(X)$; we use $\varphi_X[Y]$ as a shorthand for $\varphi_X([Y])$. The naturality implies that $\varphi_X[Z](\rho)$ equals $\varphi_Z[Z](\pi^{ZX}_*\rho)$, 
 a property that \cite{Baudot2015} called \emph{locality}; sometimes we write $\Phi_Z$ instead of  $\varphi_Z[Z]$. A $1$-cochain $\varphi$ is a $1$-cocycle iff
\begin{equation}\label{eq:1-cocycle-cond}
\forall X \in \Ob \Sinf, \forall X_1,X_2\in \sheaf S_X,  \quad \varphi_X[X_1\wedge X_2]  = X_1.\varphi_X[X_2]+ \varphi_{X}[X_1]. 
\end{equation}
Remark that this is an equality \emph{of functions} in $\sheaf Q(X)$. 

An information structure is \emph{finite} if for all $X\in \Ob S$, $E_X$ is finite. In this case, \cite[Prop.~4.5.7]{Vigneaux2020information} shows that, whenever an object $X$ can be written as a product $Y\wedge Z$ and $E_X$ is ``close'' to $E_Y\times E_Z$, as formalized by the definition of \emph{nondegenerate product} \cite[Def.~4.5.6]{Vigneaux2020information}, then there exists $K\in \Rr$ such that  for all $W\in \sheaf S_X$ and $\rho$ in $\sheaf Q(Z)$
\begin{equation}\label{discrete_1_cocycle}
  \Phi_W(\rho) = -K\sum_{w\in E_W} \rho(w) \log \rho(w).
\end{equation}

The continuous case is of course more delicate. In the case of $\sheaf E$ taking values in vector spaces, and $\sheaf Q$ made of gaussian laws, \cite{Vigneaux2019-thesis} treated it as follows. We start with a vector space $E$ with Euclidean metric $M$, and a poset $\Sinf$ of vector subspaces of $E$, ordered by inclusion, satisfying the hypotheses stated above; remark that $\wedge$ corresponds to intersection. Then we introduce $\sheaf E$ by $V\in \Ob \Sinf\mapsto E_V:=E/V$, and further identify $E_V$ is $V^\perp$ using the metric (so that we only deal with vector subspaces of $E$).  We also introduce a sheaf $\sheaf N$, such that $\sheaf N(X)$ consists of affine subspaces of $E_X$ and is closed under intersections; the sheaf is supposed to be closed under the projections induced by $\sheaf E$  and to contain the fibers of all these projections. On each affine subspace $N\in \sheaf N(X)$ there is a unique Lebesgue measure $\mu_{X,N}$ induced by the metric $M$. We consider a sheaf $\sheaf Q$ such that $\sheaf Q(X)$ are probabilities measures $\rho$ that are absolutely continuous with respect to $\mu_{X,N}$ for some $N\in \sheaf N(X)$ and have a gaussian density with respect to it. We also introduce a subfunctor $\sheaf F'$ of $\sheaf F$ made of functions that \emph{grow moderately} (i.e. at most polynomially) with respect to the mean, in such a way that the integral \eqref{eq:conditional_action} is always convergent. Ref. \cite{Vigneaux2019-thesis} called a triple $(\cat S, \sheaf E, \sheaf N)$ \emph{sufficiently rich} when there are ``enough supports'', in the sense that one can perform marginalization and conditioning with respect to projections on subspaces generated by elements of at least two different bases of $E$. In this case, we showed that every $1$-cocycle $\varphi$, with coefficients in $\sheaf F'(\sheaf Q)$, there are real constants $a$ and $c$ such that, for every $X \in \Ob \Sinf$ and every gaussian law $\rho$ with support $E_X$ and variance $\Sigma_{\rho}$ (a nondegenerate, symmetric, positive bilinear form on $E_X^*$),
\begin{equation}\label{cont_1_cocycle}
\quad\Phi_X(\rho) = a \det(\Sigma_{\rho}) + c.\dim(E_X).
\end{equation}
 Moreover, $\varphi$ its completely determined by its behavior on nondegenerate laws. (The measure $\mu_X=\mu_{X,E_X}$ is enough to define the determinant $\det(\Sigma_{\rho})$ \cite[Sec.~11.2.1]{Vigneaux2019-thesis}, but the latter can also be computed w.r.t. a basis of $E_X$ such that $M|_{E_X}$ is represented by the identity matrix.)

\section{An extended model}

\subsection{Information structure, supports, and reference measures} 

In this section, we introduce a more general model, that allows us to ``mix'' discrete and continuous variables. It is simply the product of a structure of discrete observables and a structure of continuous ones.

 Let $(\Sinf_d, \sheaf E_d)$ be a finite information structure, such that  for every $n\in \Nn$, there exist $Y_n\in \Ob \Sinf_d$ with $|E_{Y_n}|\cong \{1,2,...,n\}=:[n]$, and for every $X\in \Ob\Sinf_d$, there is a $Z\in \Ob \Sinf_d$ that can be written as non-degenerate product and such that $Z\to X$; this implies that \eqref{discrete_1_cocycle} holds for every $W\in \Ob \Sinf_d$. Let $(\Sinf_c, \sheaf E_c, \sheaf N_c)$ be a \emph{sufficiently rich triple} in the sense of the previous section, associated to a vector space $E$ with metric $M$, so that \eqref{cont_1_cocycle} holds for every $X\in \Ob \Sinf_c$. 
 
Let  $(\Sinf,\sheaf E:\Sinf \to \MeasSets)$ be the product $(\Sinf_c, \sheaf E_c)\times(\Sinf_d, \sheaf E_d)$ in the category of information structures, see \cite[Prop.~2.2.2]{Vigneaux2020information}. By definition, every object $X\in \Sinf$ has the form $\langle X_c,  X_d\rangle$ for $X_c\in \Ob \Sinf_c$ and $X_d\in\Ob \Sinf_d$, and $\sheaf E(X) = \sheaf E(X_1) \times \sheaf E(X_2)$, and there is an arrow $\pi:\langle X_1,X_2\rangle \to \langle Y_1,Y_2\rangle$ in $\Sinf$ if and only if there exist arrows $\pi_1:X_1\to Y_1$ in $\Sinf_c$ and $\pi_2:X_2\to Y_2$ in $\Sinf_d$; under the functor $\sheaf E$, such $\pi$ is mapped to $\sheaf E_c(\pi_1)\times \sheaf E_d(\pi_2)$.   There is an embedding in the sense of information structures, see \cite[Sec.~1.4]{Vigneaux2019-thesis}, $\Sinf_c \hookrightarrow \Sinf$, $X\to \langle X, \Us\rangle$; we call its image the ``continuous sector'' of $\Sinf$; we write $X$ instead of $\langle X, \Us \rangle$ and $\sheaf E(X) $ instead of $\sheaf E (\langle X, \Us \rangle) = E(X) \times \{\ast\}$. The ``discrete sector'' is defined analogously.

We extend the sheaf of supports $\sheaf N_c$ to the whole $\Sinf$ setting $\sheaf N_d(Y) = 2^{Y}\sm \{\emptyset\}$ when $Y\in \Ob \Sinf_d$, and  then $\sheaf N(Z) = \set{ A \times B }{ A \in N_c(X), B\in N_d(Y)}$ for any $Z=\langle X, Y \rangle\in \Ob \Sinf$. The resulting   $\sheaf N$ is a functor on $\Sinf$ closed under projections and intersections, that contains the fibers of every projection.  

For every $X\in \Ob \Sinf$ and $N\in \sheaf N(X)$, there is a unique reference measure $\mu_{X,N}$ compatible with $M$: on the continuous sector it is the Lebesgue measure given by the metric $M$ on the affine subspaces of $E$, on the discrete sector  it is the counting measure restricted to $N$, and for any product $A\times B \subset E_X\times E_Y$, with $X\in \Ob \Sinf_c$ and $Y\in \Ob \Sinf_d$, it is just $\sum_{y\in B} \mu_A^y$, where $\mu_A^y$ is the image of $\mu_A$ under the inclusion $A\hookrightarrow A\times B, \; a\mapsto (a,y)$. We write $\mu_X$ instead of $\mu_{X, E_X}$.
%

The disintegration of the counting measure into counting measures is trivial. The disintegration of the  Lebesgue measure $\mu_{V,N}$ on a support $N\subset E_V$ under the projection $\pi^{WV}:E_V\to E_W$ of vector spaces is given by the Lebesgue measures on the fibers $(\pi^{W,V})^{-1}(w)$, for $w\in E_W$. We recall that we are in a framework where $E_V$ and $E_W$ are identified with subspaces of $E$, which has a metric $M$; the disintegration formula is just Fubini's theorem.  

To see that disintegrations exist under any arrow of the category, consider first an object 
 $Z= \langle X, Y \rangle$ and  arrows  $\tau:Z \to X$ and $\tau':Z\to Y$, when $X\in \Ob \Sinf_c$ and $Y\in \Ob \Sinf_d$. By definition $E_Z = E_Y \times E_X = \cup_{y\in E_Y}  E_X \times  \{y\}$, and the canonical projections $\pi^{YZ}:E_Z\to E_Y$ and $\pi^{XZ}:E_Z\to E_X$ are the images under $\sheaf E$ of $\tau$ and $\tau'$, respectively. Set $E_{Z,y} :=  (\pi^{YZ})^{-1}(y)= E_X \times \{y\}$ and $E_{Z,x} :=  (\pi^{XZ})^{-1}(x)= \{x\}\times E_Y$. According to the previous definitions,  $E_Z$ has reference measure 
$
\mu_Z = \sum_{y\in E_Y} \mu_X^{y},
$ where $\mu_X^{y}$ is the image of $\mu_X$ under the  inclusion $E_X\to E_X \times \{y\}$. Hence by definition, $\{\mu_X^{y}\}_{y\in E_Y}$ is $(\pi^{YZ},\mu_Y)$-disintegration of $\mu_Z$. Similarly, $\mu_Z$ has as $(\pi^{XZ},\mu_X)$-disintegration the family of measures $\{\mu_Y^{x}\}$, where each $\mu_Y^{x}$ is the counting measure restricted on the fiber $E_{Z,x}\cong E_Y$. 

More generally, the disintegration of reference measure $\mu_{Z,A\times B}=  \sum_{y\in B} \mu_A^y$ on a support $A\times B$ of $Z=\langle X , Y\rangle$ under the arrow $\langle \pi_1,\pi_2\rangle :Z\to Z' = \langle X' , Y'\rangle$ is the collection of measures $ (\mu_{Z,A\times B,x',y'})_{(x',y')\in \pi_1(A)\times \pi_2(B)}$ such that 
\begin{equation}
\mu_{Z,A\times B,x',y'} = \sum_{y\in \pi_2^{-1}(y')} \mu^{y}_{X,A,x'}
\end{equation}
 where $\mu^{y}_{X,A,x'}$ is the image measure, under the inclusion $E_X \to E_X\times \{y\}$, of the measure $\mu_{X,A,x'}$ that comes from the $(\pi_1,\mu_{X',\pi_1(A)})$-disintegration of $\mu_{X,A}$. 
 

\subsection{Probability laws and probabilistic functionals}\label{sec:probas}

Consider the subfunctor $\Pi(\sheaf N)$ of $\Pi$ that associates to each $X\in \Ob \Sinf$ the set $\Pi(X;\sheaf N)$ of probability measures on $\sheaf E(X)$ that are absolutely continuous with respect to the reference measure $\mu_{X,N}$  on some $N\in \sheaf N(X)$. We define the \emph{(affine) support} or \emph{carrier} of $\rho$, denoted $A(\rho)$, as  the unique $A\in \sheaf N(X)$ such that $\rho\ll \mu_{X,A}$.

In this work, we want to restrict our attention to subfunctors $\sheaf Q\subset \Pi(\sheaf N)$ of probability laws such that:
\begin{enumerate}
\item\label{conditionQ:adapted} $\sheaf Q$ is adapted;
\item\label{conditionQ:existence} for each $\rho\in \sheaf Q(X)$, the differential entropy $S_{\mu_{A(\rho)}}(\rho):=- \int \log \diff{\rho}{\mu_{A(\rho)}}\d \rho$ exists i.e. it is finite;
\item\label{conditionQ:continuity} when restricted to probabilities in $\sheaf Q(X)$ with the same carrier $A$, the differential entropy is a continuous functional in the total variation norm;
\item \label{conditionQ:gaussians} for each $X\in \Ob\cat S_c$ and each $N\in \sheaf N(X)$, the \emph{gaussian mixtures} carried by $N$ are contained in $\sheaf Q(X)$---cf. next section.
\end{enumerate}

\begin{problem}\label{problem1:class_probas}
The characterization of functors $\sheaf Q$ that satisfy properties \ref{conditionQ:adapted}.-\ref{conditionQ:gaussians}.
\end{problem}

Below we use kernel estimates, which interact nicely with the total variation norm. This norm is defined for every measure $\rho$ on $(E_X,\salg B_X)$ by $\norm{\rho}_{TV} =  \sup_{A\in \salg B_X} |\rho(A)|$. Let  $\varphi$ be a real-valued functional defined on  $\Pi(E_X,\mu_A)$, and $L^1_1(A,\mu_A)$ the space of functions $f\in L^1(A,\mu_A)$ with total mass 1 i.e. $\int_A f\d \mu_A = 1$;  the continuity of $\varphi$  in the total variation distance is equivalent to the continuity of $\tilde \varphi:L_1^1(A,\mu_A)\to \Rr, f\mapsto\varphi(f \cdot \mu_A)$ in the $L^1$-norm, because of Scheffe's identity \cite[Thm.~45.4]{Devroye2002}.

  The characterization referred to in Problem \ref{problem1:class_probas} might involve the densities or the moments of the laws. It is the case with the main result that we found concerning convergence of the differential entropy and its continuity in total variation \cite{Ghourchian2017}. Or it might resemble  Ot\'ahal's result \cite{Otahal1994}: if densities $\{f_n\}$ tend to $f$ in $L^\alpha(\Rr,\d x)$ and $L^\beta(\Rr,\d x)$, for some  $0<\alpha < 1 <\beta$, then $-\int f_n(x) \log f_n(x) \d x\to - \int f(x) \log f(x) \d x$.   

For each $X\in \Ob\Sinf$, let  $\sheaf F(X)$ be the vector space of measurable functions of  $(\rho, \mu_{M})$, equivalently $(\d\rho/d\mu_{A(\rho)}, \mu_{M})$, where $\rho$ is an element of $\sheaf Q(X)$, $\mu_{M}$ is a global determination of reference measure on any affine subspace given by the metric $M$, and $\mu_{A(\rho)}$ is the corresponding reference measure on the carrier of $\rho$ under this determination. 

We want to restrict our attention to functionals for which the action \eqref{eq:conditional_action}
is integrable. Of course, these depends on the answer to  Problem \ref{problem1:class_probas}. 

\begin{problem}\label{problem2:moderate_functionals}
What are the appropriate restrictions on the functionals $\sheaf F(X)$ to guarantee the convergence of \eqref{eq:conditional_action}?
\end{problem}

\section{Computation of $1$-cocycles}\label{sec:computations_1cocycle}

\subsection{A formula for gaussian mixtures}
Let $\mathcal Q$ be a probability functor satisfying conditions \eqref{conditionQ:adapted}-\eqref{conditionQ:gaussians} in Subsection \ref{sec:probas}, and $\sheaf G$ a linear subfunctor of $\sheaf F$ such that \eqref{eq:conditional_action} converges for laws in $\sheaf Q$. In this section, we compute $H^1(\cat S, \sheaf G)$. 

Consider a generic object $Z= \langle X , Y\rangle = \langle X , 1 \rangle \wedge \langle 1 , Y\rangle$ of $\Sinf$; we write everywhere $X$ and $Y$ instead of $\langle X , 1 \rangle$ and $\langle 1 , Y\rangle$. We suppose that $E_X$  is an   Euclidean space of dimension $d$. Remind that  $E_Y$ is a finite set and $E_Z = E_X\times E_Y$. Let $\{G_{M_y, \Sigma_y}\}_{y\in E_Y}$ be gaussian densities on $E_X$ (with mean $M_y$ and covariance $\Sigma_y$), and $p:E_Y\to [0,1]$ a density on $E_Y$; then $\rho=\sum_{y\in E_Y} p(y) G_{M_y, \Sigma_y} \mu_X^{y}$ is a probability measure on $E_Z$, absolutely continuous with respect to $\mu_Z$, with density $r(x,y) :=  p(y) G_{M_y, \Sigma_y}(x)$. We have that $\pi^{YZ}_*\rho$ has density $p$ with respect to the counting measure $\mu_Y$, whereas  $\pi^{XZ}_*\rho$ is absolutely continuous with respect to $\mu_X$ (see \cite[Thm.~3]{Chang1997}) with density
\begin{equation}
\mu_Z^x(r) = \int_{E_X} \diff \rho {\mu_Z} \d\mu_Z^{x} = \sum_{y\in E_Y} p(y) G_{M_y, \Sigma_y}(x).
\end{equation}
i.e. it is a \emph{gaussian mixture}. For conciseness, we utilize here  linear-functional notation for some integrals, e.g. $\mu_Z^x(r)$. 
The measure $\rho$ has a $\pi^{XZ}$-disintegration into probability laws $\{\rho_x\}_{x\in E_X}$, such that each $\rho_x$ is concentrated on $E_{Z,x}\cong E_Y$ and 
\begin{equation}
\rho_x(x,y) = \frac{p(y)  G_{M_y, \Sigma_y}(x)}{\sum_{y\in E_Y} p(y) G_{M_y, \Sigma_y}(x)}.
\end{equation}
In virtue of the cocycle condition (remind that $\Phi_Z = \varphi_Z[Z]$),
\begin{equation}
\Phi_Z(\rho) = \varphi_Z[Y](\rho) + \sum_{y\in E_Y} p(y) \varphi_Z[X](G_{M_y, \Sigma_y}\mu_X^{y}) 
\end{equation}
Locality implies that $\varphi_Z[Y](\rho) =  \Phi_{Y}(\pi^{YZ}_*\rho)$, and the later equals \linebreak $-b \sum_{y\in E_Y} p(y) \log p(y)$, for some $b\in \Rr$, in virtue of the characterization of cocycles restricted to the discrete sector. Similarly, $\varphi_Z[X](G_{M_y, \Sigma_y}\mu_X^{y}) = \Phi_X(G_{M_y, \Sigma_y}) = a \log \det(\Sigma_y)  + c \dim E_X$ for some $a,c\in \Rr$, since our hypotheses are enough to characterize the value of any cocycle restricted to the gaussian laws on the continuous sector. Hence
\begin{equation}
\Phi_Z(\rho) =
 -b \sum_{y\in E_Y} p(y) \log p(y) +  \sum_{y\in E_Y} p(y) (a \log \det(\Sigma_y)  + c \dim E_X). \label{first_formula_global_rho}
 \end{equation}

Remark that the same argument can be applied to any densities $\{f_y\}_{y\in Y}$ instead $\{G_{M_y, \Sigma_y}\}$. Thus, it is enough to determine $\Phi_{X}$  on general densities, for each $X\in \Sinf_c$, to characterize completely the cocycle $\varphi$.

The cocycle condition also implies that 
\begin{align}
\Phi_Z(\rho) &= \Phi_X(\pi^{XZ}\rho) + \int_{E_X} \varphi_Z[Y](\rho_x) \d \pi^{XZ}_*\rho(x). \label{second_formula_global_rho}
\end{align}
The law $\pi^{XZ}_*\rho$ is a composite gaussian, and the law $\rho_x$ is supported on the discrete space $E_{Z,x}\cong E_Y$, with density 
\begin{equation}
(x,y) \mapsto \rho_x(y)= \frac{p(y)G_{M_y,\Sigma_y}(x)}{\sum_{y'\in E_Y} p(y')G_{M_y,\Sigma_y}(x)} = \frac{r(x,y)}{\mu_Z^{x}(r)}.
\end{equation}
Using again locality and the characterization of cocycles on the discrete sector, we deduce that
\begin{equation}
\varphi_Z[Y](\rho_x)= \Phi_Y(\pi^{YZ}_*\rho_x) = -b\sum_{y\in E_Y} \rho_x(y) \log \rho_x(y).
\end{equation} 
A direct computation shows that $\sum_{y\in E_Y} \rho_x(y) \log \rho_x(y)$ equals:
\begin{equation}
\sum_{y\in E_Y} p(y) \log p(y) - 
 \sum_{y\in E_Y} p(y) S_{\mu_X} (G_{M_y,\Sigma_y})   + S_{\mu_X}(\mu_Z^{x}(r)),
\end{equation}
where $S_{\mu_X}$ is the differential entropy, defined for any $f\in L^1(E_X,\mu_X)$ by $S_{\mu_X}(f) =  -\int_{E_X} f(x) \log f(x) \d \mu_X(x)$. 

It is well known that  $ S_{\mu_X}(G_{M,\Sigma}) = \frac 12 \log \det \Sigma + \frac{\dim{E_X}}{2} \log(2\pi e).$ 

Equating the right hand sides of \eqref{first_formula_global_rho} and \eqref{second_formula_global_rho}, we conclude that 
\begin{equation}\label{eq:entropy_composite_normal_bis}
\Phi_X(\pi^{XZ}\rho) = \sum_{y\in E_Y} p(y) \left((a -\frac b2)\log\det\Sigma_y + cd - \frac{bd}{2} \log(2\pi e)) \right)  + b S_{\mu_X} (\mu_Z^{x}(r)) .
\end{equation}

%


\subsection{Kernel estimates and main result}
Equation \ref{eq:entropy_composite_normal_bis} gives an explicit value to the functional $\Phi_X\in \sheaf F(X)$ evaluated on a gaussian mixture. Any density in $L^1(E_X,\mu_X)$ can be approximated by a random mixture of gaussians. This approximation is known as a kernel estimate. 

Let $X_1, X_2,...$ be a sequence of independently distributed random elements of $\Rr^d$, all having  a common density $f$ with respect to the  Lebesgue measure $\lambda_d$. 
Let $K$ be a nonnegative Borel measurable function, called \emph{kernel}, such that  $\int K \d \lambda_d=1$, and $(h_n)_n$ a sequence of positive real numbers. The \emph{kernel estimate} of $f$ is given by 
\begin{equation}\label{eq:kernel_estimate}
f_n(x) = \frac{1}{nh_n^d} \sum_{i=1}^n K \left(\frac{x-X_i}{h_n}\right).
\end{equation}
The distance $J_n = \int_E |f_n-f|\d\lambda_d$ is a random variable, invariant under arbitrary automorphisms of $E$. The key result concerning these estimates \cite[Ch.~3, Thm.~1]{Devroye1985} says, among other things, that $J_n \to 0$ in probability as $n\to \infty$ for \emph{some} $f$ if and only if $J_n \to 0$ almost surely  as $n\to \infty$ for \emph{all} $f$, which holds if and only if $\lim_n h_n = 0$ and $\lim_n nh_n^d =\infty$.

\begin{theorem}
Let $\varphi$ be a $1$-cocycle on $\Sinf$ with coefficients in $\mathcal G$, $X$ an object in $\Sinf_c$, and $\rho$ a probability law in $\sheaf Q(X)$ absolutely continuous with respect to $\mu_X$. Then, there exist real constants $c_1, c_2$ such that
\begin{equation}
\Phi_X(\rho):=\varphi_X[X](\rho) = c_1 S_{\mu_A} (\rho) + c_2 \dim E_X.
\end{equation}
\end{theorem}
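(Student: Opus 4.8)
The plan is to take a probability law $\rho \in \sheaf Q(X)$ with density $f = \diff{\rho}{\mu_X}$ and to realize $f$ as an $L^1(\mu_X)$-limit of gaussian mixtures, to which the formula \eqref{eq:entropy_composite_normal_bis} applies. Concretely, I would fix a gaussian kernel $K = G_{0, I}$ and a bandwidth sequence $(h_n)$ with $h_n \to 0$ and $n h_n^d \to \infty$; then for a sample $X_1, \ldots, X_n$ drawn i.i.d. from $f$, the kernel estimate $f_n$ of \eqref{eq:kernel_estimate} is exactly the gaussian mixture $\sum_{i=1}^n \frac{1}{n} G_{X_i, h_n^2 I}$, i.e. it has the form of the marginal $\mu_Z^x(r)$ appearing above, with the discrete variable $Y$ taking $n$ equiprobable values ($p(y) = 1/n$) and $\Sigma_y = h_n^2 I$ for every $y$. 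By \cite[Ch.~3, Thm.~1]{Devroye1985}, $J_n = \int |f_n - f|\,\d\lambda_d \to 0$ almost surely, so along almost every realization we get a sequence of gaussian mixtures converging to $\rho$ in total variation.

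Next I would evaluate \eqref{eq:entropy_composite_normal_bis} on $f_n$. With $p(y) = 1/n$ and $\Sigma_y = h_n^2 I$, we have $\log \det \Sigma_y = 2d \log h_n$ and $\mu_Z^x(r) = f_n$, so the formula collapses to
\begin{equation}
\Phi_X(f_n \cdot \mu_X) = (2a - b) d \log h_n + c d - \frac{bd}{2}\log(2\pi e) + b\, S_{\mu_X}(f_n). \label{eq:Phi_on_fn}
\end{equation}
Now I would pass to the limit $n \to \infty$. The left-hand side: since $f_n \to f$ in total variation and $\Phi_X \in \sheaf G(X) \subset \sheaf F(X)$ is (by the standing assumptions on $\sheaf Q$ and the continuity hypothesis \ref{conditionQ:continuity}, transferred to $\sheaf G$) continuous in the total variation norm on laws with a fixed carrier, $\Phi_X(f_n \cdot \mu_X) \to \Phi_X(\rho)$. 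The term $b\, S_{\mu_X}(f_n)$: again by \ref{conditionQ:continuity}, the differential entropy is total-variation continuous on $\sheaf Q(X)$ with fixed carrier, so $S_{\mu_X}(f_n) \to S_{\mu_X}(\rho) = S_{\mu_A}(\rho)$. Subtracting, the quantity $(2a-b)d\log h_n$ must converge; since $h_n \to 0$, this forces $2a - b = 0$, and the surviving relation reads $\Phi_X(\rho) = b\, S_{\mu_A}(\rho) + \big(cd - \frac{bd}{2}\log(2\pi e)\big)$, which is of the claimed form with $c_1 = b$ and $c_2 = c - \frac{b}{2}\log(2\pi e)$. (As a sanity check, the gaussian case \eqref{cont_1_cocycle} together with the known value of $S_{\mu_X}(G_{M,\Sigma})$ is consistent with $2a = b$, so no contradiction arises; the identity $2a = b$ is in fact forced independently here.)

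The main obstacle is the analytic bookkeeping around the limit in \eqref{eq:Phi_on_fn}, and it is precisely where the conjectural nature of the setup (Problems \ref{problem1:class_probas} and \ref{problem2:moderate_functionals}) enters. One must know that the gaussian mixtures $f_n$ actually lie in $\sheaf Q(X)$ (guaranteed by condition \ref{conditionQ:gaussians}), that they share the carrier $E_X$ with $\rho$ (true since a nondegenerate gaussian mixture has full support, so one should first reduce to $\rho$ with $A(\rho) = E_X$, the general case following by working inside the affine hull), that $\Phi_X$ is total-variation continuous on that carrier class, and that $S_{\mu_X}(f_n)$ converges — the last point is exactly the kind of statement covered by \cite{Ghourchian2017} or \cite{Otahal1994}, but it needs the kernel estimate to converge not merely in $L^1$ but in whatever stronger sense (e.g. simultaneously in two $L^p$ norms, or with controlled tails) the chosen class $\sheaf Q$ demands. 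A careful proof should therefore either impose hypotheses on $f$ (e.g. boundedness, or moment/integrability conditions making the kernel estimates converge in the required topology) or invoke the conjectured good class $\sheaf Q$; modulo that, the algebraic skeleton above is the whole argument. A secondary point worth spelling out is the reduction from a general carrier $A \in \sheaf N(X)$ to the full-dimensional case, and the verification that the constants $c_1, c_2$ produced do not depend on $X$ — but this follows from locality together with the fact that $a, b, c$ were already global constants attached to the cocycle $\varphi$.
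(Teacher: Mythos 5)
Your proposal is correct and follows essentially the same route as the paper: approximate $\rho$ by gaussian-kernel estimates $f_n = n^{-1}\sum_i G_{X_i,h_n^2 I}$ (Devroye's theorem for a.s.\ $L^1$-convergence), plug them into the gaussian-mixture formula \eqref{eq:entropy_composite_normal_bis} with $p$ uniform and $\Sigma_y=h_n^2 I$, and let $n\to\infty$, using the total-variation continuity of $\Phi_X$ and of the differential entropy on a fixed carrier to force $a=b/2$ and read off $c_1=b$, $c_2=c-\tfrac b2\log(2\pi e)$. The caveats you flag (membership of the mixtures in $\sheaf Q$, carrier issues, convergence of $S_{\mu_X}(f_n)$) are exactly the conjectural hypotheses the paper itself invokes, so there is no substantive divergence.
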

\begin{proof}
By hypothesis. $X=\langle X, \Us\rangle$ is an object in the continuous sector of $\Sinf$. Let  $f$ be any density of $\rho$ with respect to $\mu_X$,  $(X_n)_{n\in \Nn}$ an i.i.d sequence  of points of $E_X$ with law $\rho$, and  $(h_n)$  any sequence such that $h_n\to 0$ and $nh_n^d \to \infty$. Let $(X_n(\omega))_{n}$ be any realization of the process such that $f_n$ tend to $f$ in $L^1$. We introduce, for each $n\in \Nn$, the kernel estimate \eqref{eq:kernel_estimate} evaluated at $(X_n(\omega))_n$, taking $K$ equal to the density of a standard gaussian. Each $f_n$ is the density of a composite gaussian law $\rho_n$ equal to $n^{-1} \sum_{i=1}^n G_{X_i(\omega), h_n^2 I}$. This can be ``lifted'' to $Z = \langle X, Y_n\rangle$, this is, there exists a law $\tilde \rho$ on $E_Z:=E_X\times [n]$ with density $r(x,i) = p(i) G_{X_i(\omega), h_n^2I} (x)$, where $p:[n]\to \Rr$ is taken to be the uniform law. The arguments of Section \ref{sec:computations_1cocycle} then imply that
\begin{equation}\label{eq:entropy_composite_normal_bis}
\Phi_X(\rho_n) =  2d\left(a - \frac b2 \right) \log h_n + cd - \frac{bd}{2} \log(2\pi e)  + b S_{\mu_A} (\rho_n).
\end{equation}
In virtue of the hypotheses on $\sheaf Q$, $S_{\mu_A}(\rho)$  is finite and $S_{\mu_A} (\rho_n) \to S_{\mu_A}(\rho)$. Since $\Phi_X$ is continuous when restricted to $\Pi(A,\mu_A)$ and $\Phi_X(f)$ is a real number, we conclude that necessarily $a=b/2$. The statement is then just a rewriting of \eqref{eq:entropy_composite_normal_bis}.
\end{proof}

This is the best possible result: the dimension is an invariant associated to the reference measure, and the entropy depends on the density.

\bibliographystyle{splncs04}
\bibliography{biblio}

\end{document}